\newtheorem{prop}{Proposition}
\newtheorem{theorem}{Theorem}
\theoremstyle{definition}
\newtheorem{definition}[theorem]{Definition}
\crefname{prop}{Proposition}{Propositions}
\title{Policy-Embedded Graph Expansion: Networked HIV Testing with Diffusion-Driven Network Samples}
\author{
Akseli Kangaslahti$^1$
\and
Davin Choo$^1$\and
Lingkai Kong$^1$\and
Milind Tambe$^1$\and\\
Alastair van Heerden$^{2,3}$\And
Cheryl Johnson$^4$
\\
\affiliations
$^1$Harvard University\\
$^2$University of Witwatersrand\\
$^3$Wits Health Consortium\\
$^4$World Health Organization
\emails
akselikangaslahti@g.harvard.edu,
davinchoo@seas.harvard.edu,
lingkaikong@g.harvard.edu,
tambe@g.harvard.edu,
alastair.vanheerden@wits.ac.za,
johnsonc@who.int
}
\begin{document}

\maketitle

\begin{abstract}

HIV is a retrovirus that attacks the human immune system and can lead to death without proper treatment.
In collaboration with the WHO and the University of Witwatersrand, we study how to improve the efficiency of HIV testing with the goal of eventual deployment, directly supporting progress toward UN Sustainable Development Goal 3.3.
While prior work has demonstrated the promise of intelligent algorithms for sequential, network-based HIV testing, existing approaches rely on assumptions that are impractical in our real-world implementations.
Here, we study sequential testing on incrementally revealed disease networks and introduce Policy-Embedded Graph Expansion (PEGE), a novel framework that directly embeds a generative distribution over graph expansions into the decision-making policy rather than attempting explicit topological reconstruction.
We further propose Dynamics-Driven Branching (DDB), a diffusion-based graph expansion model that supports decision making in PEGE and is designed for data-limited settings where forest structures arise naturally, as in our real-world referral process.
Experiments on real HIV transmission networks show that the combined approach (PEGE + DDB) consistently outperforms baselines (e.g., 17.3\% improvement in discounted reward and 15.4\% more HIV detections with 25\% of the population tested) and explore key tradeoffs that drive solution quality.

\end{abstract}

\section{Introduction} \label {introduction}

Human immunodeficiency virus (HIV) attacks the human immune system and, without treatment, can lead to acquired immunodeficiency syndrome (AIDS). Although there is currently no cure available for HIV, proper medical treatment can help control the virus and its spread. UNAIDS has proposed the 95-95-95 targets for HIV that aim for 95\% of individuals with HIV to be aware of their status, 95\% of those aware of their positive status to receive treatment, and 95\% of people that are treated to reach viral suppression \cite{unaids2022}. However, a more recent UNAIDS report has shown that the first of these goals has proven to be the most difficult to reach, as about 1 in 7 individuals with HIV have not been diagnosed and 1.3 million new infections occur each year \cite{unaids2024}. Furthermore, with recent funding shifts, projections expect these gaps to grow further \cite{ten2025impact}. Therefore, in line with the UN Sustainable Development Goal 3.3 \cite{UN_SDG3}, we aim to help people with HIV get tested as quickly as possible so that they can begin receiving treatment and eventually achieve viral suppression. This also facilitates efficient delivery of HIV prevention such as pre-exposure prophylaxis (PrEP) to high-risk individuals with connections to people who test positive. 

In some countries like South Africa with high HIV prevalence, the concept of Universal Test and Treat (UTT) is in place. In UTT, everyone in a community is offered HIV testing and anyone who tests positive is immediately started on antiretroviral therapy (ART) to suppress the virus, reduce transmission, and improve individual health. However, testing resources such as test kits and staff can be limited in practice, so we need to develop approaches for determining an efficient testing order. The World Health Organization (WHO) has recommended network-based testing (NBT) for HIV \cite{who}, which has proven to be effective in South Africa \cite{jubilee2019hiv} and for infectious diseases beyond HIV (e.g., \cite{juher2017network,monroe2025can}). The goal of NBT is to identify individuals at risk for STIs by engaging their social, sexual, or family networks. This strategy helps reach stigmatized populations that are often inaccessible via centralized outreach through peer referrals that leverage existing social trust between individuals. A prominent instantiation is the Enhanced Peer Outreach Approach (EPOA), where influential ``seeds'' from key populations use performance-based incentives to recruit high-risk contacts, effectively bridging the gap between hidden networks and clinical services \cite{fhi}. 
However, the complexity of NBT makes it difficult for humans to determine optimal decisions by hand, motivating the need for an intelligent decision-making framework.
Prior work has proposed network-based testing strategies for HIV \cite{zhang2023adaptive,choo2025adaptive} and other diseases like COVID-19 \cite{cui2021network}. However, these approaches require full network observability, which is prohibitively expensive to collect in practice. While some work on unknown networks exists, many efforts target different testing problems \cite{mcfall2021optimizing} or base recommendations on time of infection data, which we do not have access to \cite{nikolopoulos2016network,morgan2019network}. 

 With these considerations in mind, we study the problem of allocating a limited number of HIV tests across individuals in a network to identify as many HIV-positive individuals as possible, as quickly as possible. Specifically, we consider a sequential frontier-based formulation (see Section \ref{problem_statement}) where the network is revealed incrementally as test recipients refer contacts to testing services. Furthermore, as in many health domains, we need to adapt to small and noisy training data, which is an especially difficult challenge when combined with the high-dimensional node covariates that we observe and the combinatorial explosions that are natural in graph-based problems.

\subsection{Contributions} \label{intro:contributions}
In this paper, we study how generative graph expansion can be embedded in the sequential decision-making loop to tackle the limited observability of the disease network and data limitations. Specifically, our contributions are:\\
\textbf{1)} We formalize our HIV testing setting into a new sequential decision-making problem on incrementally revealed graphs, which has applications beyond HIV testing.
\\
\textbf{2)} We introduce a new approach called Policy-Embedded Graph Expansion (PEGE) for solving sequential decision-making problems on incrementally revealed graphs. PEGE is the first approach to this class of problems that directly embeds a generative graph expansion model into the sequential decision-making policy loop, enabling intelligent decisions that respect network uncertainty without requiring full topological accuracy on hidden network reconstruction. PEGE is well suited for problems with data limitations and/or high variance in the realization of the unobservable graph.
\\
\textbf{3)} We further propose Dynamics Driven Branching (DDB), a new generative graph expansion model tailored to data-scarce settings where forest topologies occur naturally, e.g., our network referral system, that trains without graph-level supervision and supports decision-making in PEGE. DDB is a hybrid diffusion and Gaussian Process Regression (GPR) model that learns to capture the node-to-node dynamics that drive variation in the unobservable portion of the network.
\\
\textbf{4)} We empirically evaluate our approach (PEGE + DDB), along with several baseline algorithms, showing that it outperforms all baselines on real-world HIV networks. We also provide additional experiments that highlight key tradeoffs that drive solution quality.

\subsection{Interdisciplinary Collaboration and Impact}  \label{intro:collaboration}
Our interdisciplinary research team consists of both AI researchers and domain experts from the WHO and the University of Witwatersrand. This work on network-based testing strategies for intelligent HIV diagnosis and prevention is a part of a larger ongoing collaboration to study how AI can be used to support HIV prevention and treatment efforts in South Africa. The domain experts on our team have played a crucial and iterative role in designing the problem formulation to make it as practical as possible, assessing real-world applicability of proposed approaches, and connecting our research to avenues for deployment.  

\begin{figure}
    \centering
    \includegraphics[width=\linewidth]{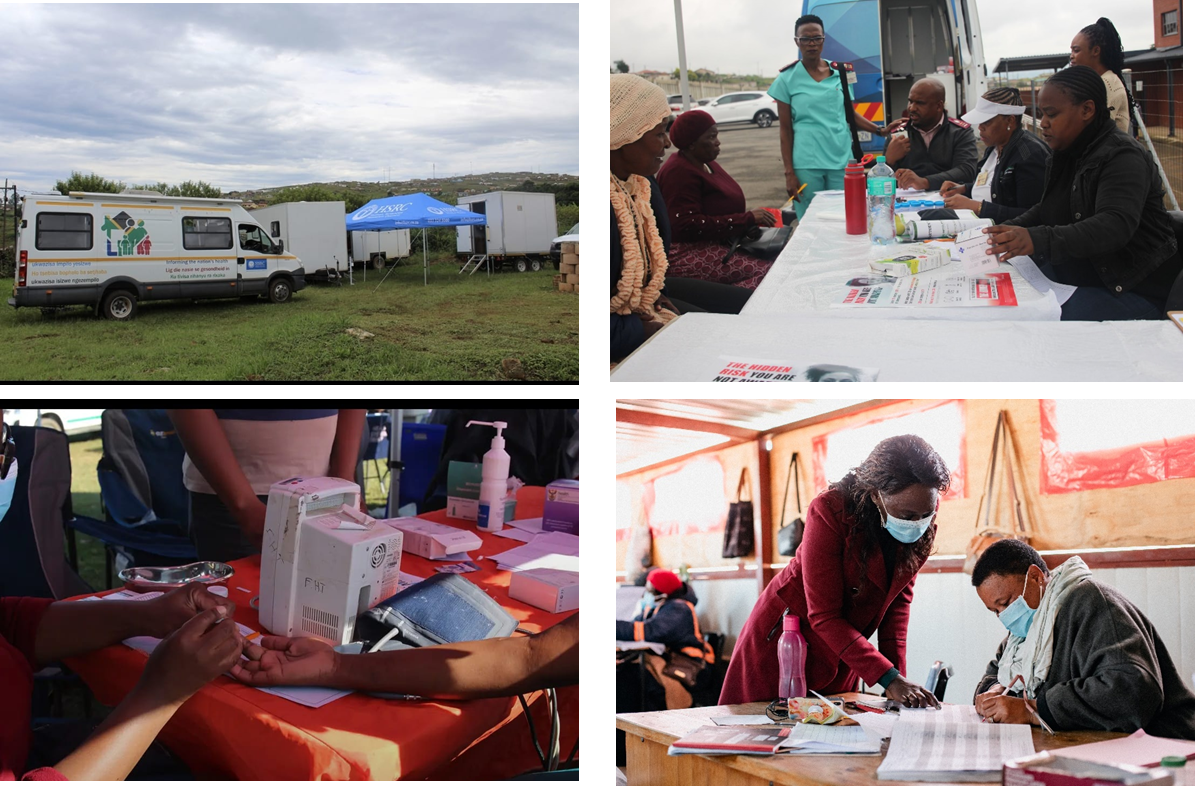}
    \caption{HIV self-test kit distribution in South Africa.}
    \label{fig:test_distribution}
\end{figure}

We are currently aiming to deploy the work presented in this paper to have a real-world impact in improving our HIV testing efforts in South Africa. Figure \ref{fig:test_distribution} shows the team distributing HIV self-test kits in the region. Specifically, we want to integrate our methods to determine testing sequences in clinics with limited testing resources like test kits and staff.

\section{Related Work} \label{related_work}
\subsection{Network-Based Disease Testing} \label{related:application}
Network-based modeling of HIV and other infectious diseases is a well-studied field \cite{mattie2022review}. One area is in algorithms for selecting groups from known networks for pooled testing to maximize resource efficiency \cite{silva2021group,sewell2022leveraging}, with some efforts incorporating dynamically updating networks \cite{srinivasavaradhan2024dynamic}. Another effort has proposed a Gittins index-based policy for sequential frontier-based testing that we leverage in this work \cite{choo2025adaptive}. However, these efforts assume full network observability, which is often unrealistic in practice. Other studies on unknown networks have either focused on other types of testing-based problem formulations \cite{mcfall2021optimizing,tan2025deeptrace} or on higher-level strategies reliant on temporal infection data \cite{nikolopoulos2016network,morgan2019network}, which are not applicable to our setting.

\subsection{Sequential Decision-Making on Incrementally Revealed Graphs} \label{related:algs}
Many types of sequential decision-making problems on incrementally revealed graphs have close ties to the problem that we study in this work (see Section \ref{problem_statement}). For example, in active search on graphs, the goal is to query as many nodes with a hidden target label as possible given a budget and graph topology \cite{wang2013active,tsui2024optimal}. Active Learning is a common approach to such problems. Other areas include online graph exploration, where the goal is to discover all nodes in an initially unknown graph with minimal path cost \cite{megow2012online}, for which learning-augmented and reinforcement learning approaches have been proposed \cite{eberle2022robustification,chiotellis2020neural}. Another relevant problem class is influence maximization (IM) with partial or myopic feedback, where the goal is to sequentially select a small set of root/seed nodes in a network that maximizes influence (e.g., spread of information) in a network \cite{yuan2016no,peng2019adaptive}. In fact, IM has been used in real network-based HIV interventions \cite{wilder2021clinical}, although these algorithms do not apply to our settings since we cannot query the network to learn more information. One closely related effort in IM shows how partial graph observability can be aided via maximum likelihood link prediction conditioned on known node metadata \cite{tran2024meta}, solving an intermediate learning task to address partial observability similarly to PEGE. Similar approaches have been proposed in online graph exploration \cite{tan2025deeptrace}. Other efforts in robotic navigation have employed intermediate learning to supplement information about uncertain frontier nodes in a graph \cite{cui2024frontier}. Decision-focused learning is an alternative strategy where intermediate learning tasks such as link prediction can be directly optimized with respect to downstream decision-making outcomes \cite{wilder2019end}. In contrast, in PEGE, the generated graph expansions represent a distribution rather than a single prediction. The model is not trained on intermediate topological accuracy or downstream decision quality; rather, its goal is to iteratively capture variation in decision-relevant areas of a graph, which change as the graph grows incrementally. This makes PEGE more suitable for high-dimensional and data-scarce settings with high variance in the realization of the unobservable graph, where topology prediction is challenging and learned decision-focused dynamics from training may not transfer well to test graphs.

\section{Problem Definition} \label{problem_statement}
We first define a more abstract problem class and solution approach for generalization beyond HIV testing, and in Section \ref{application} show how it applies to our HIV testing application. Specifically,
we define \emph{Sequential Acting on Partially Observed Graphs (SAPOG)} as a class of sequential decision-making problems in which an agent acts on vertices of a fixed but initially partially observed graph, with actions revealing additional structure and node information over time.

Throughout this section, we will use $\subseteq$ to denote subsets and subgraphs, $|\cdot|$ to denote set cardinality, $N(v)$ to refer to the neighbors of a node $v$, and $\setminus$ to denote set subtraction. We will also refer to node attributes as covariates.

\begin{definition}[Sequential Acting on Partially Observed Graphs (SAPOG)]
A SAPOG instance is specified by the tuple $\mathcal{P} = (\mathcal{G}, \mathbf{X}, \mathbf{Y}, s^{(0)}, R, \gamma, T)$, where the fixed ground truth graph $\mathcal{G} = (\mathbf{V}, \mathbf{E})$ is revealed incrementally over $T \in \mathbb{N}_{\geq 1}$ rounds with respect to discount factor $\gamma \in (0,1]$.
Functions $\mathbf{X}: \mathbf{V} \to \mathbb{R}^{d_{c}}$ and $\mathbf{Y}: \mathbf{V} \to \mathbb{R}^{d_{\ell}}$ map each node to covariates and labels (e.g., binary HIV status) respectively.
At any timestep $t \in [T]$, the state $s^{(t)} = (\mathbf{V}^{(t)}, \mathbf{V}^{(t)}_\mathbf{Y})$ is a pair of vertex subsets with $\mathbf{V}^{(t)}_\mathbf{Y} \subseteq \mathbf{V}^{(t)} \subseteq \mathbf{V}$, where $\mathbf{V}^{(t)}$ is the set of nodes discovered up to time $t$ and $\mathbf{V}^{(t)}_\mathbf{Y}$ is the subset with revealed labels.
In other words, given state $s^{(t)}$, we observe the induced subgraph $\mathcal{G}^{(t)} = \mathcal{G}[\mathbf{V}^{(t)}]$, along with covariates $\mathbf{X}^{(t)} = \{\mathbf{X}(v) \in \mathbb{R}^{d_{c}} : v \in \mathbf{V}^{(t)}\}$ and labels $\mathbf{Y}^{(t)} = \{\mathbf{Y}(v) \in \mathbb{R}^{d_{\ell}} : v \in \mathbf{V}^{(t)}_\mathbf{Y}\}$.
Upon picking an action from the frontier $a \in \mathbf{V}^{(t)} \setminus \mathbf{V}^{(t)}_\mathbf{Y}$, we reveal the label of node $a$ and its previously undiscovered neighbors, i.e., $\mathbf{V}^{(t+1)} = \mathbf{V}^{(t)} \cup N(a)$ and $\mathbf{V}^{(t+1)}_\mathbf{Y} = \mathbf{V}^{(t)}_\mathbf{Y} \cup \{a\}$.
Given some initial state $s^{(0)} = (\mathbf{V}^{(0)}, \emptyset)$, the goal is to compute a policy $\pi$ that maps each state $s^{(t)}$ to a node in $\mathbf{V}^{(t)}$, maximizing the expected total discounted reward:
\begin{equation}
\label{eq:reward-model}
\pi^* = \arg\max_\pi \mathbb{E}_{\pi} \left[ \sum_{t=0}^{T-1} \gamma^{t} \cdot R \big( s^{(t)}, \pi(s^{(t)}) \big) \right]
\end{equation}
where $\pi(s^{(t)}) \in \mathbf{V}^{(t)}\setminus \mathbf{V}^{(t)}_\mathbf{Y}$ is the frontier node selected by policy $\pi$ and $R(s^{(t)}, v)$ is the (possibly label-dependent) reward for acting on node $v = \pi(s^{(t)})$ at round $t$.
\end{definition}

\begin{figure}
    \centering
    \includegraphics[width=\linewidth]{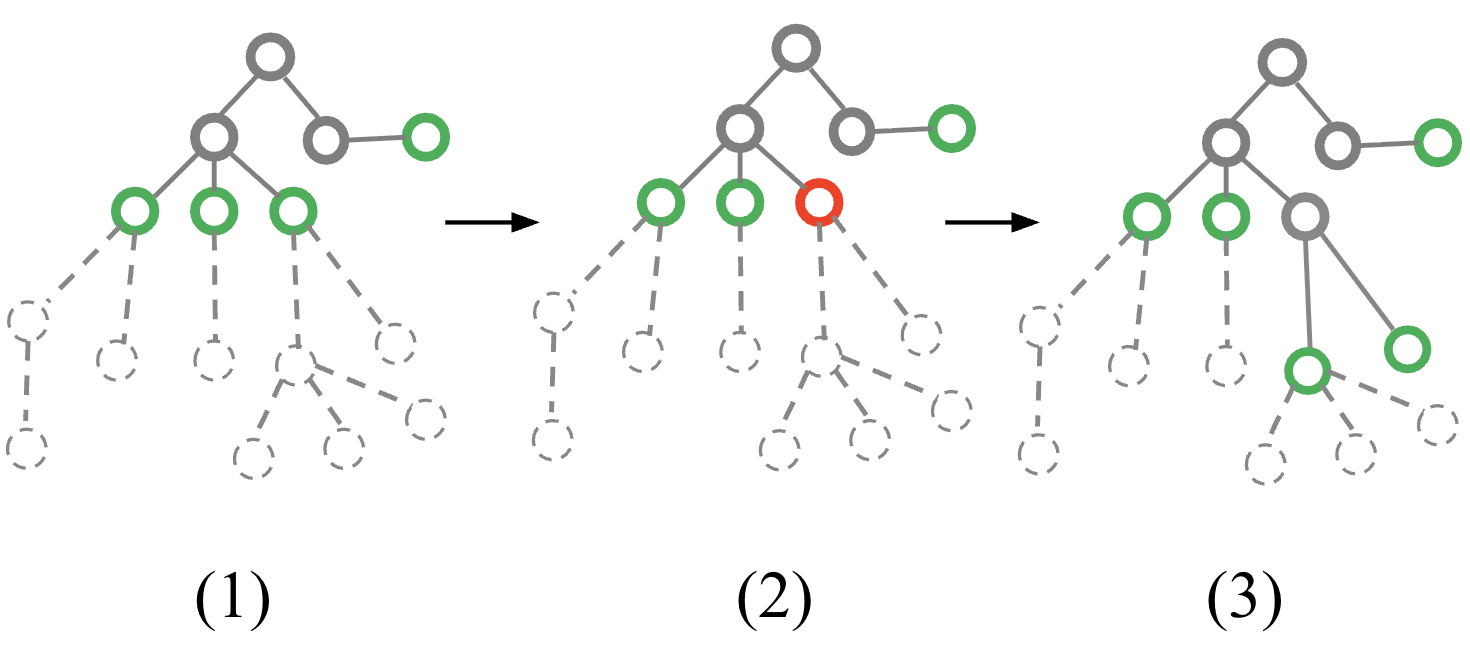}
    \caption{An illustration of a SAPOG instance at time $t$.
    Solid nodes and edges represent the induced subgraph $\mathcal{G}^{(t)} = \mathcal{G}[\mathbf{V}^{(t)}]$.
    Grey nodes represent nodes that have already been acted on, i.e., $\mathbf{V}^{(t)}_\mathbf{Y}$. Green nodes represent frontier nodes, which are nodes that have been discovered but not acted on, and thus have an unrevealed status. These frontier nodes make up the action space $\mathbf{V}^{(t)} \setminus\mathbf{V}^{(t)}_\mathbf{Y}$.
    Dashed nodes and edges represent unrevealed nodes, i.e., $\mathbf{V} \setminus \mathbf{V}^{(t)}$ and their unrevealed incident edges.
    (1) Agent evaluates the state $s^{(t)}$.
    (2) Agent acts on the red node $a \in \mathbf{V}^{(t)} \setminus\mathbf{V}_\mathbf{Y}^{(t)}$.
    (3) Agent observes the label of node $a$ and reveals its previously undiscovered neighbors in the next state.} 
    \label{fig:sapog}
\end{figure}

Figure \ref{fig:sapog} shows an example timestep in a SAPOG instance. The SAPOG class captures several other existing problem formulations. For example, Adaptive Frontier Exploration on Graphs \cite{choo2025adaptive} is a special case where $\mathbf{V}^{(0)} = \mathbf{V}$. Additionally, active search on graphs \cite{wang2013active} and selective harvesting \cite{murai2018selective} are example undiscounted cases where $\gamma = 1$ and $T < |\mathbf{V}|$ and, in active search, $\mathbf{V}^{(0)} = \mathbf{V}$. The frontier-based decision-making in SAPOG is also analogous to other fields where local actions reveal information about the neighboring environment, such as robotic exploration \cite{keidar2014efficient}.

\section{Methods} \label{methods}
\subsection{SAPOG Approach Analysis} \label{methods:analysis}
There are several challenges in solving SAPOG instances. First, since $(\mathcal{G}, \mathbf{X}, \mathbf{Y}) \setminus (\mathcal{G}^{(t)}, \mathbf{X}^{(t)}, \mathbf{Y}^{(t)})$ is unknown, the agent needs to make decisions without knowledge of their long-term impact or even their short-term rewards. Second, it is common that in practice, there are multiple different possible realizations of the completed graph, i.e., the agent should view $(\mathcal{G}, \mathbf{X}, \mathbf{Y})$ as a distribution conditioned on $(\mathcal{G}^{(t)}, \mathbf{X}^{(t)}, \mathbf{Y}^{(t)})$. While algorithms for the case where $\mathbf{V}^{(0)} = \mathbf{V}$ can be formulated, they are difficult to apply to general SAPOG instances where the graph and covariates are only partially observable and incrementally revealed.

Prior work has shown that even in the fully observable case where $\mathbf{V}^{(0)} = \mathbf{V}$, directly learning the Q function of nodes is not an effective approach \cite{choo2025adaptive}, motivating some intermediate learning step to address the uncertainty in the unobservable portion of the graph. One natural approach is used in prior work for similar decision-making problems on partially observable graphs \cite{tran2024meta}. In the context of SAPOG, the analogous approach would be to learn to predict the maximum-likelihood graph and covariate completion conditioned on the observable $(\mathcal{G}^{(t)}, \mathbf{X}^{(t)}, \mathbf{Y}^{(t)})$, then iteratively plug the learned completion into an oracle that solves for optimal actions under the assumption that the learned completion is correct. More formally, we define such a decision-making oracle $\mathcal{O}$ as an algorithm that returns a real-valued evaluation (e.g., Q-value) of any action $v$ given $\mathbf{Y}^{(t)}$ and any $(\mathcal{G}^\prime \supseteq \mathcal{G}^{(t)}, \mathbf{X}^\prime \supseteq \mathbf{X}^{(t)})$, i.e., any possible completion of the observed $(\mathcal{G}^{(t)}, \mathbf{X}^{(t)})$ that is assumed by $\mathcal{O}$ to be the ground truth. However, we show in Proposition \ref{prop:non-optimality} that this strategy is not necessarily effective in SAPOG.

\begin{prop}[Non-Optimality of Maximum-Likelihood Graph Completion]
\label{prop:non-optimality}
Assume that, during any timestep $t$ of a SAPOG instance, an agent:
\begin{enumerate}
    \item Has access to a decision-making oracle $\mathcal{O}$ that returns the Q-value of given actions.
    \item Has access to $(\mathcal{G}^\star \supseteq \mathcal{G}^{(t)}, \mathbf{X}^\star \supseteq \mathbf{X}^{(t)})$, the maximum-likelihood completion of the observed $(\mathcal{G}^{(t)}, \mathbf{X}^{(t)})$.
\end{enumerate}
Then, the action $v^\star = \underset{v \in \mathbf{V}^{(t)}\setminus\mathbf{V}^{(t)}_\mathbf{Y}}{\arg\max } \mathcal{O}(v \mid \mathbf{Y}^{(t)}, \mathcal{G}^\star, \mathbf{X}^\star)$, is not necessarily optimal in expectation. 
\end{prop}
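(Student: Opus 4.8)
The plan is to prove Proposition~\ref{prop:non-optimality} by an explicit counterexample: a small SAPOG instance in which the maximum-likelihood completion is the \emph{unique} mode of the posterior over completions, yet the action it induces through the oracle $\mathcal{O}$ is strictly suboptimal in expectation. The conceptual reason is that $\arg\max$ does not commute with taking a mode (nor with taking an expectation): the action that looks best under a single point estimate of the hidden graph need not be the action that maximizes expected value over the posterior of completions. It therefore suffices to make this concrete with a minimal instance.

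Concretely, I would take a fresh SAPOG instance with $\mathbf{V}^{(0)} = \{a,b\}$, no observed edges, $\gamma \in (0,1)$, $T = 3$, and analyze the first action (so the proposition's $\mathbf{Y}^{(t)}$ is $\emptyset$); the ground truth is a forest, matching the regime emphasized in the paper, with $a$ and $b$ each possibly having hidden descendants. The posterior over completions is supported on two graphs: $\mathcal{G}_1$, in which $a$ has a single hidden HIV-positive child and $b$ has no hidden children, with probability $p > \tfrac12$; and $\mathcal{G}_2$, in which $a$ has no hidden children and $b$ has two hidden HIV-positive children, with probability $1-p$. Both $a$ and $b$ are HIV-negative in either completion, and each node's label is a fixed deterministic function of an observable covariate, so that $\mathcal{O}$ can compute exact Q-values from $(\mathcal{G}', \mathbf{X}')$. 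Since $p > \tfrac12$, the unique maximum-likelihood completion is $\mathcal{G}^\star = \mathcal{G}_1$ (with covariates $\mathbf{X}_1$), and the agent of the proposition selects $v^\star = \arg\max_{v \in \{a,b\}} \mathcal{O}(v \mid \emptyset, \mathcal{G}_1, \mathbf{X}_1)$.

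Two short computations then finish the argument. First, the oracle values under $\mathcal{G}_1$: acting on $a$ lets one reach $a$'s positive child at the next round, giving $\mathcal{O}(a \mid \mathcal{G}_1) = \gamma$, whereas acting on $b$ first postpones that child by one round (act $b$, then $a$, then $a$'s child), giving $\mathcal{O}(b \mid \mathcal{G}_1) = \gamma^2 < \gamma$; hence $v^\star = a$ uniquely. Second, the \emph{true} expected discounted returns of committing to $a$ versus to $b$ at the first round and then playing optimally — note the first action already reveals which completion holds, so the optimal continuation is unambiguous — work out to $\mathbb{E}[a] = p\gamma + (1-p)\gamma^2$ and $\mathbb{E}[b] = p\gamma^2 + (1-p)(\gamma + \gamma^2)$. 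Their difference is $\mathbb{E}[b] - \mathbb{E}[a] = \gamma\,(p\gamma + 1 - 2p)$, which is strictly positive for suitable constants; e.g.\ $p = \tfrac35$ and $\gamma = \tfrac12$ give $\mathbb{E}[b] = 0.45 > 0.40 = \mathbb{E}[a]$. Thus $v^\star = a$ is strictly suboptimal in expectation, proving the claim.

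The main obstacle is coordinating the parameters rather than any deep step. One must verify that the instance is a legitimate SAPOG tuple, that $\mathcal{G}_1$ is genuinely the unique maximizer of the completion likelihood, and — the delicate point — that the horizon $T$ is small enough that ``spending'' the first round on the low-value subtree actually forecloses part of the high-value subtree's reward: with $T = 3$ and two hidden positives under $\mathcal{G}_2$ this holds, but with $T = 2$ the inequality can reverse, so the horizon, the subtree sizes, and the admissible range of $\gamma$ must be chosen together. Once those are pinned down, the proof reduces to the arithmetic above.
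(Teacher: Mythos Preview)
Your argument is correct and follows the same conceptual route as the paper: exhibit a two-point posterior over completions in which the action that maximizes $\mathcal{O}$ under the mode is strictly dominated in expectation by the other action. The paper's proof is more abstract --- it simply stipulates two completions with probabilities $3/5$ and $2/5$ and posits oracle values $\mathcal{O}(v_1\mid\mathcal{G}_1)=2c$, $\mathcal{O}(v_2\mid\mathcal{G}_1)=c$, $\mathcal{O}(v_1\mid\mathcal{G}_2)=c$, $\mathcal{O}(v_2\mid\mathcal{G}_2)=10c$, then compares $8c/5$ to $23c/5$ --- without building an explicit SAPOG tuple that realizes those numbers. Your version goes further by actually instantiating a forest-structured SAPOG instance, deriving the Q-values from the reward model and horizon, and checking the arithmetic; this buys you the assurance that the posited Q-values are realizable within the problem class, which the paper's proof leaves implicit. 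The trade-off is that your construction has to juggle $T$, $\gamma$, and the subtree sizes carefully (as you note), whereas the paper's abstract numbers sidestep that bookkeeping entirely.
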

\begin{proof}
Suppose that $(\mathcal{G}_1, \mathbf{X}_1), (\mathcal{G}_2, \mathbf{X}_2)$ are the only two possible completions of $(\mathcal{G}^{(t)}, \mathbf{X}^{(t)})$, with $\Pr(\mathcal{G}_1, \mathbf{X}_1 \mid \mathbf{Y}^{(t)}, \mathcal{G}^{(t)}, \mathbf{X}^{(t)}) = \frac{3}{5}$ and $\Pr(\mathcal{G}_2, \mathbf{X}_2 \mid \mathbf{Y}^{(t)}, \mathcal{G}^{(t)}, \mathbf{X}^{(t)}) = \frac{2}{5}$.
Thus, $\mathcal{G}^\star = \mathcal{G}_1$ and $\mathbf{X}^\star = \mathbf{X}_1$.
Now, for some constant $c > 0$, suppose that there exist two possible actions $v_1, v_2 \in \mathbf{V}^{(t)} \setminus\mathbf{V}^{(t)}_\mathbf{Y}$ such that $\mathcal{O}(v_1 \mid \mathbf{Y}^{(t)},\mathcal{G}_1, \mathbf{X}_1) = 2c$, $\mathcal{O}(v_2 \mid \mathbf{Y}^{(t)},\mathcal{G}_1, \mathbf{X}_1) = c$, $\mathcal{O}(v_1 \mid \mathbf{Y}^{(t)},\mathcal{G}_2, \mathbf{X}_2) = c$, and $\mathcal{O}(v_2 \mid \mathbf{Y}^{(t)},\mathcal{G}_2, \mathbf{X}_2) = 10c$.
Then, $v^\star = v_1$, which is suboptimal since the expected Q-value $v_1$ is $\frac{8c}{5}$ while $v_2$ has a larger expected Q-value of $\frac{23c}{5}$.
\end{proof}

Let $\mathcal{D}^{(t)}$ be the true distribution of possible graph completions given $(\mathbf{Y}^{(t)}, \mathcal{G}, \mathbf{X}^{(t)})$.
Since we do not know the true future, we wish to evaluate an action $v$ by maximizing
\begin{equation}
\label{eq:expected-objective}
\mathbb{E}_{(\mathcal{G}^\prime, \mathbf{X}^\prime) \sim \mathcal{D}^{(t)}}[\mathcal{O}(v \mid \mathbf{Y}^{(t)}, \mathcal{G}^\prime, \mathbf{X}^\prime)]
\end{equation}
As shown in \cref{prop:non-optimality}, we should not decide on the action via $\mathcal{O}(v \mid \mathbf{Y}^{(t)}, \text{mode}(\mathcal{D}^{(t)}))$, where $\text{mode}(\mathcal{D}^{(t)})$ denotes the maximum-likelihood completion in $\mathcal{D}^{(t)}$, but instead learn the distribution $\mathcal{D}^{(t)}$ itself.
This would then allow us to approximate \cref{eq:expected-objective} via samples. For this task, we consider a generative graph expansion model $\mathcal{M}$ that, given any state $s^{(t)} = (\mathcal{G}^{(t)}, \mathbf{X}^{(t)}, \mathbf{Y}^{(t)})$, can be sampled to produce a graph expansion $\bar{s}^{(t)} = (\mathcal{G}^\prime, \mathbf{X}^\prime)$ such that $\mathcal{G}^{(t)} \subseteq \mathcal{G}^\prime$ and $\mathbf{X}^{(t)} \subseteq \mathbf{X}^\prime$. We will see in Section \ref{methods:pege} that $\mathcal{M}$ does not need to accurately predict the fully topologies in $\mathcal{D}^{(t)}$ in order to support decision-making in PEGE.

\subsection{Policy-Embedded Graph Expansion (PEGE)} \label{methods:pege}

With the challenges of SAPOG and our analysis in mind, we propose Policy-Embedded Graph Expansion (PEGE), a new approach to sequential decision-making problems over incrementally revealed graphs. To the best of our knowledge, PEGE is the first attempt at leveraging a generative graph expansion model to address limited graph observability in the sequential decision-making loop for an incrementally revealed graph. PEGE solves SAPOG instances as described in Algorithm \ref{alg:pege} and illustrated in Figure \ref{fig:PEGE}.

\begin{algorithm}[t!]
\caption{Policy-Embedded Graph Expansion (PEGE)}
\label{alg:pege}
\begin{algorithmic}[1]
\REQUIRE SAPOG instance $\mathcal{P} = (\mathcal{G}, \mathbf{X}, \mathbf{Y}, s^{(0)}, R, \gamma, T)$, sampling parameter $k \in \mathbb{N}$, generative graph expansion model $\mathcal{M}$, decision-making oracle $\mathcal{O}$
\ENSURE Actions $a^{(1)}, \ldots, a^{(T)}$ for each time step $t \in [T]$
\FOR{$t = 0, 1, \ldots, T-1$}
    \STATE Observe state $s^{(t)} = (\mathcal{G}^{(t)}, \mathbf{X}^{(t)}, \mathbf{Y}^{(t)})$
    \STATE Using $\mathcal{M}(s^{(t)})$, sample $k$ graph completions $\bar{s}^{(t)}_1, \ldots, \bar{s}^{(t)}_k$ where $\bar{s}^{(t)}_j = (\mathcal{G}_j^{(t)}, \mathbf{X}_j^{(t)})$
    \STATE For all actions $a \in \mathbf{V}^{(t)} \setminus \mathbf{V}^{(t)}_\mathbf{Y}$, define $f(a)$ as the empirical estimate of $\mathbb{E}_{\bar{s}^{(j)} \sim \mathcal{M}(s^{(t)})}[\mathcal{O}(a \mid \mathbf{Y}^{(t)}, \bar{s}^{(j)})]$:
    \begin{align*}
    f(a)
    = \frac{1}{k} \sum_{j=1}^k \mathcal{O}(a \mid \mathbf{Y}^{(t)}, \bar{s}^{(t)}_j)
        % f(a)
        % &= \frac{1}{k} \sum_{j=1}^k \mathcal{O}(a \mid \mathbf{Y}^{(t)}, \mathcal{G}_j^{(t)}, \mathbf{X}_j^{(t)}) \\
        % &\approx \mathbb{E}_{(\mathcal{G}^\prime, \mathbf{X}^\prime) \sim \mathcal{M}(s^{(t)})}[\mathcal{O}(a \mid \mathbf{Y}^{(t)}, \mathcal{G}^\prime, \mathbf{X}^\prime)]
    \end{align*}
    \STATE Take action $a^{(t+1)} \in \underset{a \in \mathbf{V}^{(t)}\setminus\mathbf{V}^{(t)}_\mathbf{Y}}{\arg\max } f(a)$
    % action $a^{(t+1)} \in \mathbf{V}^{(t)} \setminus \mathbf{V}^{(t)}_\mathbf{Y}$ as
    % \begin{align*}
    % a^{(t+1)}
    % &\in \arg\max_{a \in \mathbf{V}^{(t)} \setminus \mathbf{V}^{(t)}_\mathbf{Y}} f(a)
    % \end{align*}
    % \STATE Take action $a^{(t+1)}$ to produce the next state $s^{(t+1)}$
\ENDFOR
\STATE \textbf{return} $a^{(1)}, \ldots, a^{(T)}$
\end{algorithmic}
\end{algorithm}

\begin{figure}
    \centering
    \includegraphics[width=\linewidth]{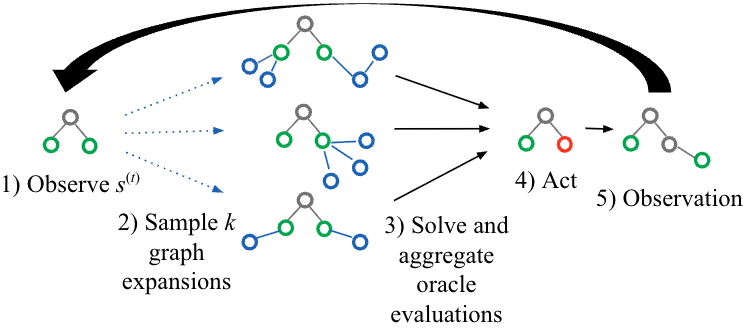}
    \caption{Workflow Diagram for PEGE. Grey nodes represent $\mathbf{V}^{(t)}_{\mathbf{Y}}$ while green nodes represent frontier nodes, i.e., $\mathbf{V}^{(t)} \setminus\mathbf{V}^{(t)}_\mathbf{Y}$. The red node represents an action. Blue nodes and edges represent graph expansions generated by $\mathcal{M}$ that are treated as true observed nodes and covariates by the oracle. 1) The state $s^{(t)}$ is observed.  2) We sample $k$ graph expansions up to depth $d$ from frontier nodes. In our case, we use a diffusion-driven model for $\mathcal{M}$, as described in Section \ref{application}. In this example, $k = 3$ and $d = 1$. 3) The oracle solves each of the $k$ instances separately and the resulting action evaluations are aggregated. 4) The agent takes an action $v$. 5) Reward, the label of $v$, and new nodes adjacent to $v$ and their covariates are observed as we transition to the next state.}
    \label{fig:PEGE}
\end{figure}

The generative graph expansion model $\mathcal{M}$ and the decision-making oracle $\mathcal{O}$ (defined in Section \ref{methods:analysis}) have a closely intertwined relationship: the graph expansions are plugged into $\mathcal{O}$ to decide which node to test on timestep $t$, which influences which nodes $\mathcal{M}$ expands from during timestep $t + 1$ after neighboring true graph nodes are revealed. Further, there can be less of a burden on $\mathcal{M}$ of learning the full distribution $\mathcal{D}^{(t)}$ conditioned on the partially observed graph since its output is only used for downstream discounted decision-making. Normally, learning $\mathcal{D}^{(t)}$ is very challenging, as $|\mathbf{V}|$ is unknown, so the model has no sense of how large $\mathcal{G}$ is. However, with a discount factor $\gamma < 1$ in place, generated nodes become less relevant to decision-making as they get further from the frontier nodes. Thus, we can limit the graph expansion to a maximum depth $d$ from frontier nodes to help address the unknown size of the complete $\mathcal{G}$ while ensuring that $\mathcal{M}$ can still capture the variation in possible graph completions that most strongly impacts decision-making. $\mathcal{M}$ must be able to adapt when these decision-relevant areas of the graph (i.e., the areas near the frontier) change as the graph is incrementally revealed and the frontier updates. The Monte-Carlo nature of PEGE helps produce actions that are more robust to variation in this decision-relevant portion of $(\mathcal{G}, \mathbf{X}, \mathbf{Y})$ conditioned on $(\mathcal{G}^{(t)}, \mathbf{X}^{(t)}, \mathbf{Y}^{(t)})$.

\section{Application to Network-Based HIV Testing} \label{application}
In our sequential network-based HIV testing setting (see Section \ref{intro:collaboration}), we are interested in producing an efficient testing order that identifies as many HIV-positive individuals as possible as quickly as possible. With budgeted testing kits and staff, this makes efficient use of limited testing resources, helps people with HIV and associated high-risk contacts get started with treatment and prevention as early as possible, and limits further spread of the virus. In this application, $\mathcal{G}$ represents a sexual contact network that grows incrementally from a set of root individuals $\mathbf{V}^{(0)}$ with initially unknown statuses ($\mathbf{V}^{(0)}_\mathbf{Y} = \emptyset$) as we allocate tests and test recipients refer contacts. One additional and important property in our application is that $\mathcal{G}$ is known to be a forest (see Section \ref{evaluation:dataset}). $\mathbf{X}$ represents binary covariates that are collected about each individual in the network, e.g., gender and occupation, ($d_c = 72$) and $\mathbf{Y}$ represents binary HIV status ($d_\ell$ = 1). Our reward function is simply $R(a) = \mathbf{Y}(a)$. For the oracle $\mathcal{O}$ used in PEGE, we use a Gittins index-based policy that produces real-valued evaluations of nodes and has proven performance guarantees when $\mathcal{G}$ is a forest \cite{choo2025adaptive}. This oracle learns a joint distribution of infection statuses, modeled via potential functions of individual covariates; see Appendix B of \cite{choo2025adaptive} for details. At test time, the oracle uses the learned function parameters $\theta$ to compute scores via a discrete version of the branching bandit formulation of \cite{keller2003branching}, and selects the node with the largest score.

However, one significant challenge remains with applying PEGE to our application: the choice of the generative graph expansion model $\mathcal{M}$. While off-the-shelf models may be more applicable in different problem instances, we have quite specific requirements for $\mathcal{M}$ given the PEGE setting, our application of the SAPOG problem, and our small, noisy dataset (see Section \ref{evaluation:dataset}):\\
\textbf{1)} $\mathcal{M}$ must be a generative model that can produce a distribution of possible graph expansions.\\
\textbf{2)} $\mathcal{M}$ must be able to generate both edges and high-dimensional node covariates.\\
\textbf{3)} $\mathcal{M}$ must be adaptable to various sizes of known partial graphs and tunable expansion depth limits ($d$).\\
\textbf{4)} $\mathcal{M}$ must preserve the forest property in the graphs that it generates (see Section \ref{evaluation:dataset}).\\
\textbf{5)} $\mathcal{M}$ must be suited to data-scarce settings. \\
No current model that we are aware of meets all of these criteria. Thus, we design Dynamics-Driven Branching (DDB), a new generative graph expansion model that trains only on low-level node-to-node dynamics with no graph or sequence-level supervision. DDB is a hybrid model that fuses denoising diffusion with Gaussian Process Regression (GPR) and learns to recursively expand branches off of the observed frontier nodes. The forest prior of $\mathcal{G}$ and the recursive property of DDB allow us to tabularize data from training graphs, reducing training of DDB to only node-to-node dynamics. Specifically, let $v_{\text{parent}}$ and $v_{\text{child}}$ be a pair of parent and child nodes with covariates $\mathbf{x}_{\text{parent}}$ and $\mathbf{x}_{\text{child}}$, respectively. Then:
\hspace{0pt}\\
\textbf{1)} We train a GPR model \cite{scikit-learn} to produce a Gaussian distribution of $|N(v_{\text{parent}})|$ conditioned on $\mathbf{x}_{\text{parent}}$.\\
\textbf{2)} We train a denoising diffusion model to learn a distribution of $\mathbf{x}_{\text{child}}$ conditioned on $\mathbf{x}_{\text{parent}}$. Our diffusion model employs a sinusoidal time embedding followed by four linear layers with GELU activations, trained via MSE loss.\\
At inference, DDB starts at each frontier node and recursively samples the GPR model to generate the number of successors at each expanded node, then samples the diffusion model to generate a vector of covariates for each of those successors. This way, DDB generates the stochastic node-to-node dynamics that drive variation in possible graph completions near frontier nodes, rather than directly learning specific structures and sequences present in the limited training data. 

The task of generating neighboring node covariates is especially challenging since the covariates have 72 dimensions and it is unclear which specific features in the parent node covariates are the most relevant. The diffusion model is thus a critical element of DDB, as its mode covering property helps learn an expressive distribution of the high-dimensional $\mathbf{x}_{\text{child}}$ conditioned on $\mathbf{x}_{\text{parent}}$. Additionally, targeting this task rather than learning to generate the entire graph in one shot helps limit training time of the diffusion model to only a few minutes and reduce overfitting or mode collapse.

Since DDB only expands new child nodes from parent nodes, it always preserves the forest structure of the input graph. Expansion ends when either the branching naturally terminates or the maximum node depth of $d$ from frontier nodes is reached. DDB is sampled to produce the $k$ graph expansions used in PEGE. Since DDB is conditioned on frontier nodes only and has a tunable expansion depth $d$, it is indifferent to the size of the known partial graph and can generate graph expansions of any specified depth.

The Gittins index-based oracle operates in $O(m^2)$ time on trees with $m$ nodes \cite{choo2025adaptive}.
In the worst-case, a frontier of $n$ nodes each generating $b^d$ descendants results in $m = nb^d$, yielding a per-sample complexity of $O(n^2 b^{2d})$.
However this is a very loose upper bound.
In practice, the frontier is a small subset of the $n$ nodes in the known graph, and the average branching factor is much lower than the maximum branching factor $b$.
Furthermore, our method is embarrassingly parallel: disjoint connected components (i.e., ``communities'' that make up a larger population) and each graph expansion can be processed independently across parallel threads before being their results are combined.
On an M4 Pro MacBook using 12 threads, a sample run for $k = 24$ and $d = 3$ takes approximately $7, 11, 16$ seconds when the observed graph has $50, 100, 150$ nodes, respectively.

\section{Evaluation} \label{evaluation}
\subsection{Dataset} \label{evaluation:dataset}
We build our experiments on a publicly available real-world HIV sexual contact network dataset with anonymized entries provided by the Inter-university Consortium for Political and Social Research (ICPSR) \cite{ICPSR} \footnote{Our codebase is publicly available at \url{https://github.com/akanga-harvard/policy-embedded-graph-expansion}. The ICPSR dataset is publicly available after completing the user agreement in the ICPSR portal \cite{ICPSR}.}. This dataset contains 778 nodes comprising 188 connected components and has an 11.3\% HIV prevalence rate. This small size is important for our setting since we expect similar data constraints as we deploy (see Section \ref{intro:collaboration}). Figure \ref{fig:dataset} shows an overview of the HIV sexual contact network.

\begin{figure}
    \centering
    \includegraphics[width=\linewidth]{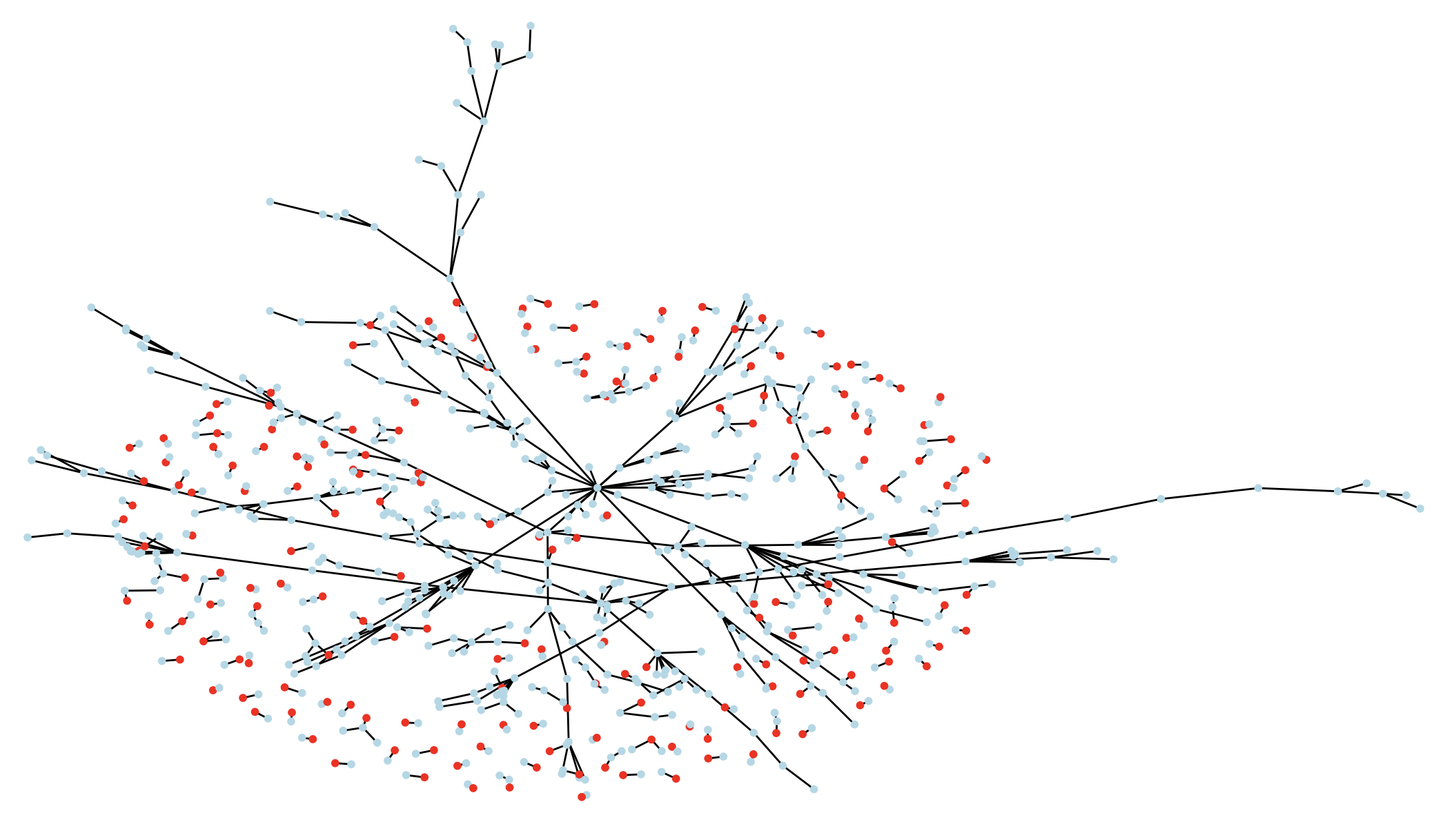}
    \caption{Overview of the ICPSR HIV sexual contact network dataset after forest projection (root nodes shown in red). Nodes in the graph represent people and edges represent sexual contacts.}
    \label{fig:dataset}
\end{figure}

Each node is associated with a 72-dimensional vector of binary covariates that are one-hot encoded from 17 categorical covariates related to gender, sexual orientation, occupation, etc. Additionally, to mirror the real-world data-collection process from our HIV prevention efforts in South Africa (see Section \ref{intro:collaboration}), we project each connected component in the test graph to a tree, forming a forest. In practice, edges are recorded as people refer their contacts to testing clinics and services. However, people are unlikely to (or may not be allowed to) accept a second referral to the same service, especially when referrals are limited. As such, the forest structure arises naturally and noise is introduced in the form of randomly unrecorded edges. These forest structures have been observed in other studies on peer-to-peer network recruitment as well (e.g., \cite{stein2018stochastic}). Furthermore, prior research has observed that actual STI transmission networks are often sparse and tree-like, and that the Gittins index-based policy remains robust to non-tree graphs \cite{choo2025adaptive}. 

We split the ICPSR HIV sexual contact network data into 5 partitions with an equal number of connected components such that no connected components are split up. We also ensure that partitions roughly match the distribution of connected component sizes that is present in the overall dataset. For each experiment in this section, we run 5 SAPOG instances with static test graphs and average the normalized results, weighting each instance by how many nodes are in the test graph. Each of the 5 data partitions serves as the test graph once and serves as part of the training data four times.

\subsection{Baseline Algorithms} \label{evaluation:baselines}
In addition to our method (PEGE + DDB), we evaluate five baseline algorithms for reference, which we describe below:
\hspace{0pt}\\
\textbf{1) Random (R):} Always select a random frontier node.
\\
\textbf{2) Greedy Neighbor (GN):} During each timestep, select a frontier node that has a parent node that tested positive. If no such node exists, return a random frontier node. This is an intuitive strategy that a human might apply.
\\
\textbf{3) Greedy Classifier (GC):} Use the training data to train a linear support vector classifier (SVC) with balanced class weights \cite{scikit-learn} to predict the probability that a given node tests positive conditioned on its covariates. A linear SVC was selected for the greedy baseline as it outperformed logistic regression, RBF-SVC, and gradient boosting. At test time, we always select the frontier node with the greatest probability of testing positive. This is a heuristic approach that is indifferent to any network-based data and is representative of an individual risk-based algorithm that has been proposed for real-world deployment (see Section \ref{intro:collaboration}).
\\
\textbf{4) Deep Q-Network (DQN):} We implement a DQN baseline \cite{mnih2015human} with the Pytorch NNConv neural architecture \cite{fey2019fast} that uses a message-passing GNN with edge-conditioned weights \cite{gilmer2017neural}. We allow the DQN to train on $\mathcal{G}$ and $\mathbf{X}$ (but not $\mathbf{Y}$), which serves as a conservative buffer against other learned policies.
\\
\textbf{5) Fully Observable Gittins (FOG)}: Apply the Gittins index-based policy to $\mathcal{G}, \mathbf{X}, \mathbf{Y}^{(t)}$ on each timestep and select the frontier node with the highest Gittins index. The feature mapping $\theta$ used in the Gittins index policy is allowed to train on the test set, meaning the oracle itself is also stronger in this case. Since this algorithm can observe the entire ground truth network, FOG serves as an unattainable upper bound for any policy with the same information described in our model.

\subsection{Experiments and Results}\label{evaluation:experiments}
We evaluate methods based on how much normalized discounted reward they accrue throughout the experiments (see \cref{eq:reward-model} for reward model), which captures our real-world goal of identifying as many HIV-positive individuals as possible as quickly as possible. First, we compare the average performance of our method (PEGE + DDB) to that of each baseline. For this experiment, we fix the expansion depth in DDB at $d = 3$, the number of graph expansions in PEGE at $k = 24$, the discount factor at $\gamma = 0.99$, and the number of tests at $T = |\mathbf{V}|$. We also cap the maximum expansion width at $b = 25$ to guarantee bounded graph expansions. The left side of Figure \ref{fig:core_experiment} shows the empirical results. Table \ref{tab:stats} also summarizes some performance statistics, including area under the cumulative discounted reward curve (AUC) and cumulative discounted reward at different testing budgets.

% \begin{figure}
%     \centering
%     \includegraphics[width=\linewidth]{figures/plots_final.png}
%     \caption{Left: Performance of our method (PEGE + DDB) compared to several baselines and the FOG upper bound, as measured by cumulative discounted reward. Right: Cumulative discounted reward accrued by our full PEGE + DDB method compared to Naive Gittins, which applies the Gittins index-based policy to the observed graph without graph expansion, and the FOG upper bound.}
%     \label{fig:core_experiment}
% \end{figure}

% \begin{table}[!t]
% \centering
% \begin{tabular}{@{}lcccc@{}}
% \toprule
% Algorithm  & AUC & 25\% & 50\% & 75\%\\
% \midrule
% Random            & 0.506  & 0.389  & 0.598 & 0.670\\
% Greedy Neighbor   & 0.561  & 0.473  & 0.622 & 0.715\\
% Greedy Classifier & 0.616  & 0.526  & 0.682 & 0.737\\
% DQN               & 0.603  & 0.504  & 0.665 & 0.747\\
% PEGE + DDB (Ours) & \textbf{0.674} & \textbf{0.617} & \textbf{0.737} & \textbf{0.798}\\
% \emph{Fully Observable Gittins} & \emph{0.860} & \emph{0.809} & \emph{0.962} & \emph{0.990}\\
% \bottomrule
% \end{tabular}
% \caption{Total AUC and cumulative discounted reward at different testing budgets (25\%, 50\%, 75\% of population) for each algorithm. In each column, the best is \textbf{bolded} and the upper bound is \emph{italicized}.}
% \label{tab:stats}
% \end{table}

On average, our PEGE + DDB method outperforms all baselines. The total AUC for PEGE + DDB is about 9.42\% greater than that of the closest baseline (Greedy Classifier). Furthermore, PEGE + DDB remains advantageous under different testing budgets, outperforming the strongest baseline by 17.3\%, 8.06\%, and 6.83\% under testing budgets of 25\%, 50\%, and 75\% of the population, respectively. PEGE also captures about 78.4\% of the AUC of FOG, and 76.3\%, 76.6\%, and 80.6\% of the cumulative discounted reward accrued by FOG under testing budgets of 25\%, 50\%, and 75\% of the population, respectively. The remaining gap is likely due to the observability gap, the stronger oracle that FOG accesses, and error in DDB. These boosts that we observe in discounted reward also indicate more positive detections early on. For example, when testing only 25\% of the population, PEGE + DDB discovers about 15.4\% more positive cases than the strongest baseline. If network-based testing algorithms are eventually deployed with tens or hundreds of thousands of participants, improvements of this scale can translate to hundreds or thousands of additional positive HIV identifications, assuming the same population-wide positivity rate of 11.3\%.

Next, to isolate contributions of each component of our method, we present an ablation study comparing PEGE + DDB against Naive Gittins (no graph expansion), PEGE + DDB (Full Theta), which uses the $\theta$ trained on the full (train and test) network in the oracle, and the FOG upper bound. The results are shown in the right side of Figure \ref{fig:core_experiment}. We observe:
(i) \emph{Data Richness}: Training $\theta$ on the full network increases PEGE + DDB's AUC by 9.87\%, highlighting possible gains from larger datasets.
(ii) \emph{Framework Value}: PEGE + DDB outperforms the AUC of Naive Gittins by about 10.2\% (similar to improvements over other heuristics), isolating the framework's benefit from the oracle's baseline performance.
(iii) \emph{Early Impact}: DDB expansions are most useful in early rounds when graph uncertainty is highest. This makes the framework especially fit for resource-critical problems (e.g., at 10\% budget we outperform Naive Gittins by 66.7\%).

\begin{figure}[t]
    \centering
    \includegraphics[width=\linewidth]{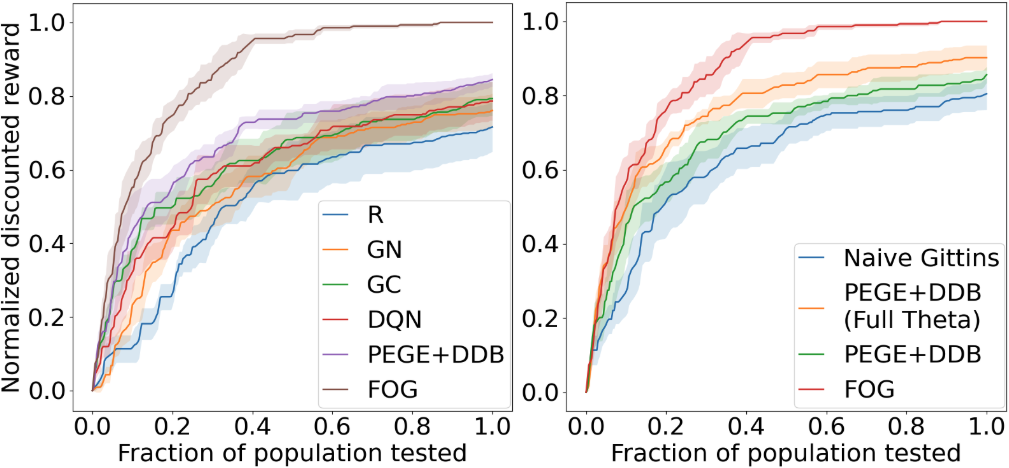}
    \caption{Left: Performance of our method (PEGE + DDB) compared to several baselines and the FOG upper bound, as measured by normalized cumulative discounted reward. Right: Performance of our PEGE + DDB method compared to Naive Gittins with no graph expansion, PEGE + DDB with $\theta$ trained on the full network in the oracle, and the FOG upper bound.}
    \label{fig:core_experiment}
\end{figure}

\begin{table}[t]
\centering
\begin{tabular}{@{}lcccc@{}}
\toprule
Algorithm  & AUC & 25\% & 50\% & 75\%\\
\midrule
Random            & 0.506  & 0.389  & 0.598 & 0.670\\
Greedy Neighbor   & 0.561  & 0.473  & 0.622 & 0.715\\
Greedy Classifier & 0.616  & 0.526  & 0.682 & 0.737\\
DQN               & 0.603  & 0.504  & 0.665 & 0.747\\
PEGE + DDB (Ours) & \textbf{0.674} & \textbf{0.617} & \textbf{0.737} & \textbf{0.798}\\
\emph{Fully Observable Gittins} & \emph{0.860} & \emph{0.809} & \emph{0.962} & \emph{0.990}\\
\bottomrule
\end{tabular}
\caption{Total AUC and normalized discounted reward at different testing budgets (25\%, 50\%, 75\% of population) for each algorithm. In each column, the best is \textbf{bolded} and the upper bound is \emph{italicized}.}
\label{tab:stats}
\end{table}

Next, we investigate some of the key parameters that drive the performance of our approach. First, we examine how varying $k$, the number of DDB graph expansions that we sample and aggregate in PEGE, affects performance. Second, we explore different methods of aggregating oracle scores to select actions in PEGE. While we used the mean in our PEGE specification in Section \ref{methods:pege}, the aggregation method is tunable. Specifically, we experiment with the following additional aggregation heuristics (with $k = 24$):
\hspace{0pt}\\
\textbf{1) Mode:} Define a $k$-length array $g$ containing the frontier node with the highest Gittins index from each of the $k$ synthetic graph expansions. Select the mode of $g$.
\\
\textbf{2) Mean + variance (explorative):} Select the frontier node with the greatest sum of mean and variance in Gittins index across the $k$ graph expansions.
\\
\textbf{3) Mean - variance (exploitative):} Select the frontier node with the greatest difference between mean and variance in Gittins index across the $k$ graph expansions.

Figure \ref{fig:ablation} shows the results of these two experiments, in which we again fix $d = 3$ and $\gamma = 0.99$. First, we observe that increasing $k$ from 1 to 12 results in an AUC gain of 7.01\%, highlighting the benefit of considering a distribution of sampled network completions. Next, we see that the performance of PEGE + DDB begins to plateau after $k = 12$, with fluctuations in AUC of less than 0.508\% for $k = 24$ and $k=36$, compared to $k=12$. This means that our approach is relatively sample efficient, which is important because the decision-making oracle $\mathcal{O}$ can be quite expensive. Even though the parallelizability of PEGE helps curb runtime, large numbers of samples are still consequential in terms of computing resources.

\begin{figure}
    \centering
    \includegraphics[width = \linewidth]{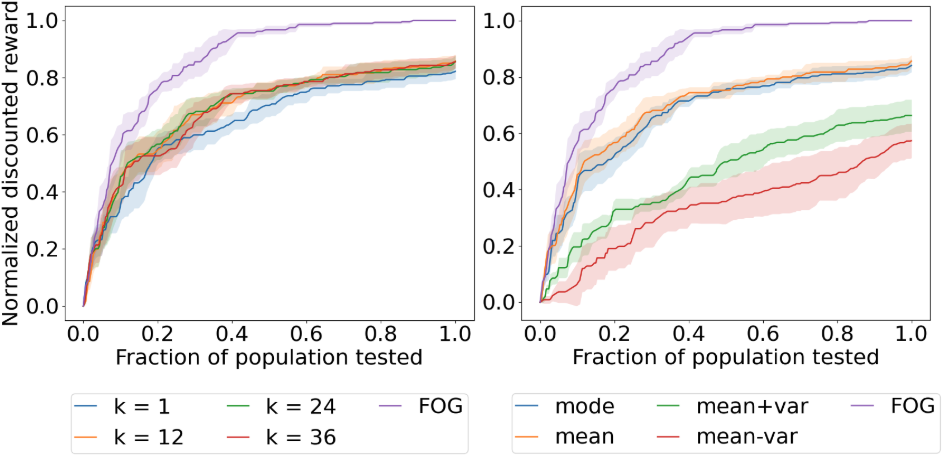}
    \caption{Cumulative discounted reward accrued by our PEGE + DDB method using different sample parameters $k$ (left) and Gittins index aggregation methods (right).}
    \label{fig:ablation}
\end{figure}

Furthermore, we observe that heuristic aggregation approaches fail to outperform the simple sample mean aggregation. In fact, the mean-var and mean+var aggregation methods record AUCs of only about 48.4\% and 65.4\% (respectively) of that of mean, demonstrating that striking a balance of exploration and exploitation is important in our setting. Additionally, mode-based heuristic aggregation also underperforms mean by about 2.81\% in terms of AUC.

\section{Conclusion} \label{conclusion}
In this work, we explore a sequential HIV testing problem over an incrementally revealed graph that addresses gaps in prior network-based HIV testing work. We propose a new approach (PEGE + DDB) and perform experiments on real-world HIV networks that show that it consistently outperforms all other baselines. In future work, we are studying related problems in HIV testing, such as referral and incentive distribution algorithms for peer-to-peer network recruitment \cite{pan2026adaptive} and batched HIV testing policies \cite{kong2026latent}. We are also working to explore how PEGE can be applied to other applications of SAPOG.

Our interdisciplinary team is motivated by our real-world efforts in South Africa, where we are working to deploy the algorithms that we demonstrate in this paper. The improvements that our approach provides can lead to hundreds or thousands of additional HIV detections when deployed at a larger scale of tens or hundreds of thousands of testing participants, which is crucial for getting people with HIV started with treatment, getting their high-risk contacts started with prevention like PrEP, and limiting further spread of the virus. This represents direct progress towards the first of the 95-95-95 targets for HIV \cite{unaids2022} and the UN Sustainable Development Goal 3.3 \cite{UN_SDG3}. 

%% The file named.bst is a bibliography style file for BibTeX 0.99c

\section*{Acknowledgements}
This work was supported by ONR MURI N00014-24-1-2742. The findings and conclusions in this report are those of the authors and do not necessarily represent the official position of the WHO.
\bibliographystyle{named}
\bibliography{ijcai26}

\end{document}